\def\Includegraphics[#1]#2{{\tt[#1]#2}}
\let\epsilon\varepsilon
\def\eqref#1{{\rm(\ref{#1})}}
\newtheorem{proposition}{Proposition}
\newtheorem{remark}{Remark}
\def\dif{\mathop{}\!\mathrm d}
\def\arctanh{\mathop{\mathrm{arctanh}}\nolimits}
\def\sn{\mathop{\mathrm{sn}}\nolimits}
\def\dn{\mathop{\mathrm{dn}}\nolimits}
\def\cn{\mathop{\mathrm{cn}}\nolimits}
\def\Scaling{\mathfrak S}
\def\xTranslation{\mathfrak T^{x}}
\def\yTranslation{\mathfrak T^{y}}
\begin{document}

\title{More exact solutions of the constant astigmatism equation}
\author{Adam Hlav\'a\v{c}}
\address{Mathematical Institute in Opava, Silesian University in
  Opava, Na Rybn\'\i\v{c}ku 1, 746 01 Opava, Czech Republic.
  {\it E-mail}: Adam.Hlavac@math.slu.cz}
\date{}


\begin{abstract}
By using B\"acklund transformation for the sine-Gordon equation, new periodic exact solutions of the constant astigmatism equation 
$ z_{yy} + ({1}/{z})_{xx} + 2 = 0 $ are generated from a~seed which corresponds to Lipschitz surfaces of constant astigmatism. 
\end{abstract}

\section{Introduction}
In this paper, we construct new exact solutions of the \it constant astigmatism equation (CAE)\rm 
$$\numbered\label{CAE}
z_{yy} + (\frac1z)_{xx} + 2 = 0,
$$
the Gauss equation for \it constant astigmatism surfaces \rm immersed in Euclidean space.
These surfaces are defined by the condition $\rho_2 - \rho_1 = \rm const \not = 0$, where $\rho_1$, $\rho_2$ are the principal
radii of curvature. 

It is well known \cite{Bia I, Bia II, Ku, vLil, Lip, Ri} that evolutes (focal surfaces) of constant astigmatism surfaces are pseudospherical, i.e. with constant negative Gaussian curvature. Conversely, involutes of pseudospherical surfaces corresponding to parabolic geodesic net are of constant astigmatism.  

After a century of oblivion, constant astigmatism surfaces 
reemerged in 2009 in the work \cite{B-M I} concerning the systematic search for integrable classes of Weingarten surfaces. 
In the paper, the surfaces were given a name and Equation \eqref{CAE} was derived for the first time. 
The geometric connection between surfaces of constant astigmatism and pseudospherical surfaces provides
 a transformation \cite[Sect.~6 and 7]{B-M I} between the CAE and the famous \it sine-Gordon equation \rm
$$\numbered\label{SG}
\omega_{\xi\eta} =  \sin \omega,
$$
which is the Gauss equation of pseudospherical surfaces parameterized by Chebyshev--asymptotic coordinates. 

To our best knowledge, the list of known exact solutions of the CAE is exhausted by: 
\begin{itemize}
\item Von Lilienthal solutions $z = c-y^2 $ and 
$z = 1/(c - x^2)$, where $c$ is a constant. 
For details and corresponding surfaces of constant astigmatism see \cite{B-M I, Hl, vLil}.
\item Lipschitz solutions \cite[Thm. 1]{H-M II} corresponding to Lipschitz surfaces of constant astigmatism \cite{Lip}, 
see also \cite{Hl} for pictures of the surfaces.
\item Solutions parameterized by an arbitrary function of a single variable obtained by Manganaro and Pavlov \cite{M-P}. They have no
counterpart solutions of the sine-Gordon equation.
\item Examples of exact solutions generated by 
using \it reciprocal transformations \rm \cite{H-M III}. 
\item Multisoliton solutions \cite{Hl} defined as counterparts of multisoliton solutions of the sine-Gordon equation. They arise from von Lilienthal seed (i.e. from zero solution of the sine-Gordon equation) using nonlinear superposition formula 
for the CAE (see Proposition \ref{prop1} below). 
\end{itemize}  

The purpose of this paper is to construct another seed solution of the CAE which is ready to be transformed to infinite number of new solutions by nonlinear superposition formula \eqref{prop1eq}. In order to do it, we need a \it nonzero \rm sine-Gordon seed and its 
B\"acklund transformation. Techniques for obtaining such pairs can be found e.g. in \cite{Ho-Mi} or \cite{Gr}; 
the simplest initial sine-Gordon solution being the so called ``travelling wave''. 
The corresponding solutions of the CAE are the Lipschitz solutions 
\cite{H-M II, Lip}.  

 In other words, in this paper, we are going to reproduce Lipschitz solutions to obtain an infinite set of new solutions from it. 
However, the explicit form of the general Lipschitz solution as introduced in \cite[Thm.~1]{H-M II}, 
is not suitable for our approach; we seek the seed solution parameterized by coordinates $\xi,\eta$ in order to be prepared for the formula \eqref{prop1eq} below.

Note that the class of Lipschitz solutions exactly matches the class of solutions invariant with respect to
the Lie symmetries of the CAE. The exhaustive list of them consists of 
translations $\xTranslation : (x,y,z) \mapsto (x + t, y, z)$, 
$\yTranslation : (x,y,z) \mapsto (x, y + t, z)$ and the scaling
$\Scaling : (x,y,z) \mapsto (`e^{-t} x, `e^{t} y, `e^{2t} z)$, where $t$ is a
real parameter.

\section{Construction of a seed solution} 
The \it B\"acklund transformation \rm \cite{Bae}, see also \cite{Bia I, Bia G}, 
for the sine-Gordon equation \eqref{SG} takes a~solution $\omega$ 
and produces a new solution $\omega^{(\lambda)}$,  
given by the system
$$\numbered\label{BT}
(\frac{\omega^{(\lambda)} - \omega}{2})_\xi  =   \lambda\sin(\frac{\omega^{(\lambda)} + \omega}{2}), \qquad
(\frac{\omega^{(\lambda)} + \omega}{2})_\eta =   \frac{1}{\lambda}\sin(\frac{\omega^{(\lambda)} - \omega}{2}),
$$
$\lambda$ being called a \it B\"acklund parameter. \rm
What is more, the superposition formula \cite{Bia b, Bia G}
$$\numbered\label{supSG}
\tan\frac{\omega^{(\lambda_1\lambda_2)} - \omega}{4} = \frac{\lambda_1 + \lambda_2}{\lambda_1 - \lambda_2}\tan\frac{\omega^{(\lambda_1)} - \omega^{(\lambda_2)}}{4}
$$ 
allows us to obtain the solution $\omega^{(\lambda_1\lambda_2)}$, the B\"acklund transformation of $\omega^{(\lambda_1)}$
with B\"acklund parameter $\lambda_2$, by purely algebraic manipulations.

In \cite{H-M II} we showed that solutions of the sine-Gordon equation 
corresponding to Lipchitz's solutions of the CAE satisfy 
$$\numbered\label{LipCond}
\omega_{\xi} = k\omega_\eta,
$$ 
where $k$ is a nonzero constant. Thus, they are of the form $\omega(k\xi + \eta + C) $, where $C$~is a~constant. 
Let us perform a transformation to new coordinates $\alpha = k\xi + \eta$ and $\beta = k\xi - \eta$ in which  
the sine-Gordon equation turns out to be
$$\numbered \label{sGab} 
\omega_{\alpha\alpha} - \omega_{\beta\beta} = \frac{1}{k}\sin \omega.
$$ 
The condition \eqref{LipCond} reduces to 
$$
\omega_\beta = 0
$$
and, therefore, the seed sine-Gordon solutions we are working with depend solely on $\alpha$~and they satisfy the ODE
$$\numbered\label{ODE}
k\omega_{\alpha\alpha} = \sin \omega.
$$

Let $\omega$ be a solution of \eqref{ODE}. Let us proceed to its B\"acklund transformation $\omega^{(\lambda)}$ 
in accordance with \cite{Ho-Mi}. 
Firstly, $\omega^{(\lambda)}$ can be conveniently written  as 
$$\numbered\label{BTomega}
\omega^{(\lambda)} = 4\arctan \delta^{(\lambda)} - \omega, 
$$
where $\delta^{(\lambda)}$ satisfies the system
$$\numbered\label{delta}
\delta^{(\lambda)}_\alpha = 
\frac{({\delta^{(\lambda)}}^2 - 1) \sin \omega + 2 \delta^{(\lambda)} (\cos \omega + \frac{\lambda^2}{k}) + \lambda ({\delta^{(\lambda)}}^2 + 1) \omega_\alpha }{4\lambda}, 
\\
\delta^{(\lambda)}_\beta = 
\frac{(-{\delta^{(\lambda)}}^2 + 1) \sin \omega - 2 \delta^{(\lambda)} (\cos \omega - \frac{\lambda^2}{k}) + \lambda ({\delta^{(\lambda)}}^2 + 1) \omega_\alpha }{4\lambda}.
$$
The second equation in \eqref{delta} is separable and it has a solution
$$\numbered\label{deltasol}
\delta^{(\lambda)} = \frac{f}{a k^2} + \frac{c}{a k} \tanh [c(\beta+b(\alpha)+K)],
$$
where
$$\numbered\label{afc}
a = \frac{\sin\omega}{4k\lambda} - \frac{\omega_\alpha}{4k}, \qquad
f = \frac{\lambda}{4}-\frac{k \cos \omega}{4\lambda}, \\
c = \frac{\sqrt{\lambda^4 -2 k \lambda^2\cos \omega  - k^2 (\lambda^2 \omega_\alpha^2-1) }}{4k \lambda}
$$
and $b(\alpha)$ is yet unknown function of $\alpha$. 
Substituting \eqref{deltasol} into the first equation in \eqref{delta} one obtains equation for $b$, namely
$$\numbered\label{b}
b_\alpha = \frac{\lambda \omega_{\alpha} + \sin\omega}{\lambda \omega_{\alpha} - \sin\omega}.
$$

Furthermore, it is possible to express $\omega_\alpha$ in terms of $\omega$. 
Multiplying both sides of~\eqref{ODE} by $\omega_\alpha$ and integrating, one can reduce the order of the equation which becomes 
$$\label{omegaalpha}
k\omega_\alpha^2  = -2\cos \omega + 2l, 
$$
$l$ being a constant. 
Solving for $\omega_\alpha$ we obtain
$$\numbered\label{omegaalpha2}
\omega_{\alpha} = \pm\sqrt{\frac{2l - 2\cos\omega}{k}}.
$$
Substituting into equation for $c$ in \eqref{afc} one can easily see that
$$
c = \frac{\sqrt{\lambda^4-2kl\lambda^2+k^2}}{4k\lambda} = \rm const.
$$

In \cite{H-M I} we introduced the method of finding a solution of the CAE corresponding to given sine-Gordon solution $\omega$ and its B\"acklund transformation $\omega^{(\lambda)}$. 
The solution in question is given in terms of
 \it associated potentials \rm $g^{(\lambda)},x^{(\lambda)},y^{(\lambda)}$ \cite{H-M I, Hl} defined as solutions of the system
$$\numbered\label{aspot0}
x^{(\lambda)}_\xi = \lambda g^{(\lambda)} \sin\frac{\omega^{(\lambda)} + \omega}{2},
\qquad
x^{(\lambda)}_\eta = \frac1\lambda g^{(\lambda)} \sin\frac{\omega^{(\lambda)} - \omega}{2},
\\
y^{(\lambda)}_\xi = \frac{\lambda}{g^{(\lambda)}}  \sin\frac{\omega^{(\lambda)} + \omega}{2} ,
\qquad
y^{(\lambda)}_\eta = - \frac{1}{\lambda g^{(\lambda)}} \sin\frac{\omega^{(\lambda)} - \omega}{2} ,
\\ 
g^{(\lambda)}_\xi = g^{(\lambda)}\lambda \cos\frac{\omega^{(\lambda)} + \omega}{2},
\qquad 
g^{(\lambda)}_\eta = g^{(\lambda)}\frac1\lambda \cos\frac{\omega^{(\lambda)} - \omega}{2}.
$$
Expressing $z^{(\lambda)} = 1/{g^{(\lambda)}}^{2}$ in terms of $x^{(\lambda)},y^{(\lambda)}$, one obtains a solution $z(x,y)$ of the CAE.

Moreover, according to \cite{Bia a}, considering a dependence of $\omega^{(\lambda)}(\xi,\eta)$ on an integration constant~$K$, potentials $x^{(\lambda)}(\xi,\eta)$ and $g^{(\lambda)}(\xi,\eta)$ can be obtained by algebraic handling and differentiation with respect to $K$, namely  
$$\numbered\label{difK1}
g^{(\lambda)} = \frac{\dif \omega^{(\lambda)}}{\dif K}, \qquad
x^{(\lambda)} = -2\frac{\dif \ln g^{(\lambda)}}{\dif K}.
$$ 
Nevertheless, computing $y^{(\lambda)}(\xi,\eta)$ requires integration of the second pair of equations in the system \eqref{aspot0}.

Let us slightly change the notation. Let $\omega^{[0]} = \bar{\omega}^{[0]} $ be a solution of the sine-Gordon equation. 
Fix B\"acklund parameters $\lambda_1, \dots, \lambda_{k + 1}$ and, according to the diagram 
\begin{displaymath} \numbered \label{lattice}
    \xymatrix{ 
\omega^{[0]} \ar[r]^{\lambda_2} \ar[d]_{\lambda_1} 	& \bar{\omega}^{[1]} \ar[r]^{\lambda_3} \ar[d]_{\lambda_1} & \bar{\omega}^{[2]} \ar[d]_{\lambda_1} \ar[r]^{\lambda_4}  & \bar{\omega}^{[3]} \ar[d]_{\lambda_1} \ar[r]^{\lambda_5}  & \bar{\omega}^{[4]} \ar[d]_{\lambda_1} \ar[r]^{\lambda_6} & \dots
\\
\omega^{[1]} \ar[r]^{\lambda_2}  & 	\omega^{[2]}  \ar[r]^{\lambda_3} & 	\omega^{[3]} \ar[r]^{\lambda_4} & 	\omega^{[4]} \ar[r]^{\lambda_5} & 	\omega^{[5]} \ar[r]^{\lambda_6} & \dots,
}
\end{displaymath}
denote
$$\numbered\label{defk}
\omega^{[k]} = \omega^{(\lambda_1\lambda_2\ldots\lambda_k)}, \qquad
\bar{\omega}^{[k]} = \omega^{(\lambda_2\lambda_3\ldots\lambda_{k+1})}.
$$

Let $g^{[j]}, x^{[j]}, y^{[j]}$ be  associated potentials corresponding to the pair $\bar{\omega}^{[j-1]}, \omega^{[j]}$. 
They satisfy (see \cite{Hl} for details) recurrences
$$
x^{[j+1]}
 = \frac{\lambda_{j+1} \lambda_1}{\lambda_{j+1}^2 - \lambda_1^2} 
   (x^{[j]}
     -  
      \frac{2 \lambda_{j+1} \lambda_1  \sin\frac{\bar{\omega}^{[j]} - \omega^{[j]}}{2}}
         {\lambda_{j+1}^2 + \lambda_1^2
          - 2 \lambda_{j+1} \lambda_1 \cos\frac{\bar{\omega}^{[j]} - \omega^{[j]}}{2}
          } g^{[j]} ),
\\
y^{[j+1]}
 = \frac{\lambda_{j+1}^2 - \lambda_1^2}{\lambda_{j+1}\lambda_1} y^{[j]}
   - \frac{2}{g^{[j]}} \sin\frac{\bar{\omega}^{[j]} - \omega^{[j]}}{2},
\\
g^{[j+1]} = 
\frac{-\lambda_{j+1}\lambda_1}{\lambda_{j+1}^2+\lambda_1^2-2\lambda_{j+1}\lambda_1\cos\frac{\bar{\omega}^{[j]} - \omega^{[j]}}{2}}g^{[j]},
$$
which are easy to solve as follows.

\begin{proposition}\cite{Hl}\label{prop1}
Let $x_1, y_1, g_1$ be the associated potentials corresponding to the pair $\omega^{[0]}, \omega^{[1]}$ of sine-Gordon solutions.
Let $S^{[j]}$ be $4\times 4$ matrices with entries defined by formulas 
$$
S^{[j]}_{11} = \frac{\lambda_{j+1} \lambda_1}{\lambda_{j+1}^2 - \lambda_1^2}, \quad
S^{[j]}_{13} = -\frac{\lambda_{j+1}^2 \lambda_1^2}{\lambda_{j + 1}^2 - \lambda_1^2} \times \frac{2  \sin\frac{\bar{\omega}^{[j]} - \omega^{[j]}}{2} }
{\lambda_{j+1}^2 + \lambda_1^2 - 2 \lambda_{j+1} \lambda_1 \cos\frac{\bar{\omega}^{[j]} - \omega^{[j]}}{2} }, \\
S^{[j]}_{22} = \frac{\lambda_{j+1}^2 - \lambda_1^2}{\lambda_{j+1}\lambda_1}, \quad
S^{[j]}_{24} = -2  \sin\frac{\bar{\omega}^{[j]} - \omega^{[j]}}{2}, \\
S^{[j]}_{33} = \frac{1}{S^{[j]}_{44}} = \frac{-\lambda_{j+1}\lambda_1}{\lambda_{j+1}^2+\lambda_1^2-2\lambda_{j+1}\lambda_1\cos\frac{\bar{\omega}^{[j]} - \omega^{[j]}}{2} 
} 
$$
all the other entries being zero. Let 
$$\numbered\label{prop1eq}
(
\begin{array}{c}
x^{[n]} \\
y^{[n]} \\
g^{[n]} \\
1/g^{[n]} 
\end{array}
)
= (
\prod_{i = 1}^{n - 1} S^{[i]} )
(
\begin{array}{c}
x_1 \\
y_1 \\
g_1 \\
1/g_1 
\end{array}
).
$$
Then $x^{[n]}, y^{[n]}, g^{[n]}$ are the associated potentials corresponding to the pair $\bar{\omega}^{[n-1]}, \omega^{[n]}$. Moreover, 
if $z^{[n]} = 1/{g^{[n]}}^2 $, then $z^{[n]}(x^{[n]},y^{[n]})$ is a solution of the constant astigmatism equation \eqref{CAE}.
\end{proposition}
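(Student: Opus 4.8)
The plan is to argue by induction on $n$, taking as given (from \cite{Hl}) the three recurrences displayed just before the statement, which already express the associated potentials $x^{[j+1]},y^{[j+1]},g^{[j+1]}$ of the pair $\bar\omega^{[j]},\omega^{[j+1]}$ in terms of those of the pair $\bar\omega^{[j-1]},\omega^{[j]}$. The whole statement then reduces to the observation that each matrix $S^{[j]}$ is designed so that left multiplication by it performs exactly one such step on the augmented quadruple $(x^{[j]},y^{[j]},g^{[j]},1/g^{[j]})^{\mathrm{T}}$. First I would record, from the lattice \eqref{lattice} and the notation \eqref{defk}, that $\omega^{[n]}=\omega^{(\lambda_1\ldots\lambda_n)}$ is the $\lambda_1$-B\"acklund transform of $\bar\omega^{[n-1]}=\omega^{(\lambda_2\ldots\lambda_n)}$, so that $\bar\omega^{[n-1]},\omega^{[n]}$ is a B\"acklund pair with parameter $\lambda_1$; this is what lets the constant astigmatism conclusion follow once the potentials are identified.

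The key computational step is a row-by-row verification that
$$
(x^{[j+1]},y^{[j+1]},g^{[j+1]},1/g^{[j+1]})^{\mathrm{T}}=S^{[j]}(x^{[j]},y^{[j]},g^{[j]},1/g^{[j]})^{\mathrm{T}}.
$$
Row three reads $g^{[j+1]}=S^{[j]}_{33}g^{[j]}$, which is literally the third recurrence. Row one produces $S^{[j]}_{11}x^{[j]}+S^{[j]}_{13}g^{[j]}$; here I would factor out $S^{[j]}_{11}$ and check that the ratio $S^{[j]}_{13}/S^{[j]}_{11}$ collapses to the coefficient of $g^{[j]}$ appearing in the first recurrence. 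Row two produces $S^{[j]}_{22}y^{[j]}+S^{[j]}_{24}(1/g^{[j]})$, which is the second recurrence, \emph{provided} the fourth entry equals $1/g^{[j]}$. This is precisely why the fourth slot is carried along: the $y$-recurrence is affine in $1/g^{[j]}$, so adjoining $1/g$ as an independent coordinate is what turns one recurrence step into a genuinely linear map representable by a single matrix.

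It then remains to confirm that the fourth coordinate stays equal to the reciprocal of the third all through the iteration, and this I regard as the main (if modest) obstacle rather than any individual piece of algebra. Row four reads $1/g^{[j+1]}=S^{[j]}_{44}(1/g^{[j]})$ with $S^{[j]}_{44}=1/S^{[j]}_{33}$, so if the fourth entry equals $1/g^{[j]}$ then after multiplication it equals $1/(S^{[j]}_{33}g^{[j]})=1/g^{[j+1]}$; since the starting quadruple $(x_1,y_1,g_1,1/g_1)^{\mathrm{T}}$ has a true reciprocal in its last slot, an easy induction propagates the relation to every level, so the linear formalism faithfully encodes the nonlinear recurrences. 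Iterating the one-step identity then gives $(x^{[n]},\ldots)^{\mathrm{T}}=S^{[n-1]}\cdots S^{[1]}(x_1,\ldots)^{\mathrm{T}}$, i.e.\ \eqref{prop1eq}. I would flag the ordering of the factors explicitly, since the $S^{[j]}$ do not commute in general; what saves the bookkeeping is that their common sparsity pattern (nonzero only in positions $11,13,22,24,33,44$) is stable under multiplication, so the product keeps the same shape at every stage.

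To finish, once $x^{[n]},y^{[n]},g^{[n]}$ are established as the associated potentials of the B\"acklund pair $\bar\omega^{[n-1]},\omega^{[n]}$, I would simply invoke the construction of \cite{H-M I} recalled around \eqref{aspot0}: expressing $z^{[n]}=1/{g^{[n]}}^2$ as a function of $x^{[n]},y^{[n]}$ yields a solution of the constant astigmatism equation \eqref{CAE}, which is the remaining assertion.
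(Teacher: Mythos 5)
Your proposal is correct and matches the paper's approach: the paper gives no separate argument beyond citing \cite{Hl} and noting that the recurrences displayed just before the proposition are ``easy to solve,'' and your row-by-row check that each $S^{[j]}$ implements one recurrence step on the quadruple $(x^{[j]},y^{[j]},g^{[j]},1/g^{[j]})^{\mathrm{T}}$, together with the induction keeping the fourth slot equal to the reciprocal of the third, is exactly that implicit argument made explicit. Your remark on the non-commutativity of the $S^{[j]}$ and the consequent ordering of the factors in \eqref{prop1eq} is a careful point the paper glosses over.
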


Our main task is to find a solution of the CAE corresponding to $\omega$ (a solution of \eqref{ODE}), 
and $\omega^{(\lambda)}$ given by \eqref{BTomega}. 
For this pair, we need to solve the system \eqref{aspot0}, which in terms of  $\alpha,\beta$ turns out to be 
$$\numbered\label{aspot}
x^{(\lambda)}_\alpha = g^{(\lambda)} \times \frac{  2 \delta^{(\lambda)} (\lambda^2 + k\cos \omega) + k({\delta^{(\lambda)}}^2-1) \sin \omega  }
{2k\lambda ({\delta^{(\lambda)}}^2+1)}, \\
x^{(\lambda)}_\beta = g^{(\lambda)} \times \frac{ 2 \delta^{(\lambda)} (\lambda^2 - k\cos \omega )-k({\delta^{(\lambda)}}^2-1)\sin \omega  }
{2k\lambda ({\delta^{(\lambda)}}^2+1)},\\
y^{(\lambda)}_\alpha = \frac{1}{g^{(\lambda)}} \times \frac{  2 \delta^{(\lambda)} (\lambda^2 - k\cos \omega)-k({\delta^{(\lambda)}}^2-1)\sin \omega  }{2k\lambda ({\delta^{(\lambda)}}^2+1)}, \\
y^{(\lambda)}_\beta = \frac{1}{g^{(\lambda)}} \times \frac{  2 \delta^{(\lambda)} (\lambda^2 + k\cos \omega) + k({\delta^{(\lambda)}}^2-1)\sin \omega }{2k\lambda ({\delta^{(\lambda)}}^2+1)},
\\
g^{(\lambda)}_\alpha = g^{(\lambda)} \times
\frac{(1-{\delta^{(\lambda)}}^2) (\lambda^2 + k\cos \omega) + 2 k\delta^{(\lambda)} \sin \omega }
{2 k\lambda ({\delta^{(\lambda)}}^2+1)}, \\
g^{(\lambda)}_\beta = g^{(\lambda)} \times
\frac{(1-{\delta^{(\lambda)}}^2) (\lambda^2-k\cos \omega) - 2 k\delta^{(\lambda)} \sin \omega }
{2k \lambda ({\delta^{(\lambda)}}^2+1)}.
$$ 

\begin{proposition} 
Let $\omega$ be a solution of \eqref{ODE} and let $\omega^{(\lambda)} = 4\arctan \delta^{(\lambda)} - \omega$ 
be its B\"acklund 
transformation with parameter $\lambda$, where $\delta^{(\lambda)}$ is given by \eqref{deltasol}. Let $a,f,c$ be defined by \eqref{afc} and let $b$ satisfy \eqref{b}. 
Then the associated potentials $x^{(\lambda)},y^{(\lambda)},g^{(\lambda)}$ 
corresponding to the pair $\omega, \omega^{(\lambda)}$ 
are given by formulas
$$\numbered\label{gx}
x^{(\lambda)} = \frac{8 c^2 k f \cosh 2 B+4 c (f^2+k^4 a^2+c^2 k^2) \sinh 2 B}{(f^2+k^4 a^2+c^2 k^2) \cosh 2 B+2  c k f \sinh 2 B+f^2+k^4 a^2-c^2 k^2}, \\
y^{(\lambda)} = 
(\frac{f\sin\omega}
{16\lambda c^2k^2a}
-\frac {2 f-\lambda}{8c^2k})
\cosh 2B \\  
-(\frac {  ( k^4 a^2-c^2k^2-f^2)\sin\omega}
{32\lambda c^3k^3a}
+
\frac {f(2f-\lambda)}{8c^3k^2}) 
\sinh 2B 
 \\
+\frac{1}{\lambda^4-2kl\lambda^2+k^2}(\frac{\lambda^4-k^2}{2}
\beta 
+ \frac{\lambda^4+k^2}{2} \alpha
- {k \lambda^2} \int\cos\omega \dif\alpha), \\
g^{(\lambda)} = \frac{4c^2 k^3 a (1-\tanh^2  B) }{k^4 a^2+f^2+2 c k f\tanh  B + c^2 k^2\tanh^2  B }, 
$$
where $B = c(\beta+b+K)$ and $l$ is a constant. Moreover, 
if $z^{(\lambda)} = 1/{g^{(\lambda)}}^{2} $, then $z^{(\lambda)}(x^{(\lambda)},y^{(\lambda)})$ is a solution of the constant astigmatism equation \eqref{CAE}.
\end{proposition}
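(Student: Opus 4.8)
The plan is to treat the three potentials separately, since two of them are delivered by the differentiation-in-$K$ shortcut \eqref{difK1} whereas the third genuinely needs a quadrature, and then to obtain the constant astigmatism statement from Proposition \ref{prop1}.

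First I would read off $g^{(\lambda)}$ and $x^{(\lambda)}$ from \eqref{difK1}, which applies here because $K$ is exactly the B\"acklund integration constant introduced in \eqref{deltasol}. Substituting $\omega^{(\lambda)} = 4\arctan\delta^{(\lambda)} - \omega$ and observing that $\omega$, and hence $a$, $f$, $c$ and $b(\alpha)$, are all independent of $K$, the only $K$-dependence of $\delta^{(\lambda)}$ sits in $\tanh B$ with $B = c(\beta + b + K)$ and $\partial B/\partial K = c$. Thus $g^{(\lambda)} = \dif\omega^{(\lambda)}/\dif K = 4(1 + (\delta^{(\lambda)})^2)^{-1}\dif\delta^{(\lambda)}/\dif K$ collapses at once to the stated rational expression in $\tanh B$ once $1 + (\delta^{(\lambda)})^2$ is put over the common denominator $a^2k^4$. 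One further logarithmic differentiation gives $x^{(\lambda)} = -2\dif\ln g^{(\lambda)}/\dif K$; rewriting the resulting $\tanh B$ and $1 - \tanh^2 B$ through the double-angle identities $\cosh^2 B = (\cosh 2B + 1)/2$ and $2\sinh B\cosh B = \sinh 2B$ turns it into the displayed hyperbolic form. Both steps are purely mechanical.

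The substantive part is $y^{(\lambda)}$, which \eqref{difK1} does not provide; here I would integrate the $y$-equations of \eqref{aspot} directly. Using the already known $g^{(\lambda)}$ one finds that $\frac{1}{g^{(\lambda)}\,((\delta^{(\lambda)})^2+1)}$ simplifies to $(ak/4c^2)\cosh^2 B$, so that the right-hand side of $y^{(\lambda)}_\beta$ reduces to a combination of $\cosh^2 B$, $\sinh B\cosh B$ and $\sinh^2 B$ with $\alpha$-dependent coefficients. Since $c$ is constant, integrating in $\beta$ at fixed $\alpha$ (with $\partial B/\partial\beta = c$) then yields the $\cosh 2B$ and $\sinh 2B$ terms together with a $\beta$-linear term, plus an undetermined function $C(\alpha)$. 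I would fix $C(\alpha)$ by matching $y^{(\lambda)}_\alpha$: subtracting the $\alpha$-derivative of the hyperbolic and $\beta$-linear parts from the prescribed $y^{(\lambda)}_\alpha$ leaves a residual depending on $\alpha$ alone, whose primitive supplies the affine part $\frac{\lambda^4-k^2}{2}\beta + \frac{\lambda^4+k^2}{2}\alpha - k\lambda^2\int\cos\omega\dif\alpha$ over $\lambda^4 - 2kl\lambda^2 + k^2$. Because $\omega(\alpha)$ is known only implicitly through \eqref{ODE}, the primitive $\int\cos\omega\dif\alpha$ stays as a quadrature.

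The main obstacle is precisely this last matching: the two prescriptions for $y^{(\lambda)}_\alpha$ and $y^{(\lambda)}_\beta$ must be compatible, so that the $\alpha$-derivatives of the $\cosh 2B$, $\sinh 2B$ coefficients (which carry $\sin\omega$, $\omega_\alpha$, $a$, $f$) cancel against the $b_\alpha$-driven contributions. Controlling these cancellations requires repeated use of the first integral $k\omega_\alpha^2 = 2l - 2\cos\omega$ behind \eqref{omegaalpha2}, the defining relation \eqref{b} for $b_\alpha$, and the constancy of $c$ noted after \eqref{omegaalpha2}; existence of a single-valued $y^{(\lambda)}$ is guaranteed in advance by the construction of associated potentials in \cite{H-M I}, so this is a consistency check rather than an existence question. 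Once $x^{(\lambda)}, y^{(\lambda)}, g^{(\lambda)}$ are confirmed to satisfy \eqref{aspot}, hence \eqref{aspot0}, the final claim is immediate: they are genuine associated potentials for the pair $\omega, \omega^{(\lambda)}$, and therefore, by Proposition \ref{prop1} (the method of \cite{H-M I}), $z^{(\lambda)} = 1/{g^{(\lambda)}}^2$ regarded as a function of $x^{(\lambda)}, y^{(\lambda)}$ solves the constant astigmatism equation \eqref{CAE}.
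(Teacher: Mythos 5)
Your proposal follows the paper's own proof essentially step by step: $g^{(\lambda)}$ and $x^{(\lambda)}$ are obtained from the differentiation-in-$K$ formulas \eqref{difK1}, while $y^{(\lambda)}$ is found by integrating the $y^{(\lambda)}_\beta$ equation of \eqref{aspot} with respect to $\beta$ and then fixing the remaining function of $\alpha$ by substitution into the $y^{(\lambda)}_\alpha$ equation, with the final claim delegated to Proposition \ref{prop1}. Your added observations (the simplification of $1/[g^{(\lambda)}({\delta^{(\lambda)}}^2+1)]$ to $(ak/4c^2)\cosh^2 B$ and the role of the first integral and of $b_\alpha$ in the consistency check) are correct refinements of steps the paper labels ``routine,'' so there is nothing to correct.
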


\begin{proof}
According to \cite[Prop. 4]{H-M I} 
(see also \cite{Bia a, Bia II}) 
$g^{(\lambda)}$ and $x^{(\lambda)}$ can be obtained without integration. 
One can simply consider the dependence of 
$$
\omega^{(\lambda)} = 4\arctan \delta^{(\lambda)} - \omega
$$ 
on the integration constant $K$ and differentiate with respect to it, namely 
$$\numbered\label{difK}
g^{(\lambda)} = \frac{\dif \omega^{(\lambda)}}{\dif K} , \qquad
x^{(\lambda)} = -2\frac{\dif \ln g^{(\lambda)}}{\dif K}.
$$
Performing \eqref{difK} one immediately obtains formulas for $x^{(\lambda)}$ and $g^{(\lambda)}$.

A variable $y^{(\lambda)}$ can be obtained by integrating the middle two equations in \eqref{aspot}. The latter can be routinely 
integrated with respect to $\beta$,   
$$\numbered\label{ylambda}
y^{(\lambda)} = 
(\frac{f\sin\omega}
{16\lambda c^2k^2a}
-\frac {2 f-\lambda}{8c^2k})
\cosh 2c(\beta+b+K) \\  
-(\frac {  ( k^4 a^2-c^2k^2-f^2)\sin\omega}
{32\lambda c^3k^3a}
+
\frac {f(2f-\lambda)}{8k^2c^3}) 
\sinh 2c(\beta+b+K) 
 \\
+\frac12 \frac{\lambda^4-k^2}{\lambda^4-2kl\lambda^2+k^2}
\beta
+F(\alpha),
$$
where $F(\alpha)$ is yet unknown function of $\alpha$. Substituting \eqref{ylambda} into the equation for $y^{(\lambda)}_\alpha$ one obtains 
$$
F_\alpha = \frac12  
\frac{\lambda^4-2 k\lambda^2 \cos\omega + k^2}
{\lambda^4-2kl\lambda^2+k^2}
$$
meaning that
$$
F = \frac12 \frac{\lambda^4+k^2}{\lambda^4-2kl\lambda^2+k^2} \alpha
- \frac{k \lambda^2}{\lambda^4-2kl\lambda^2+k^2} \int\cos\omega \dif\alpha.
$$
\end{proof}

Once a seed solution $(x^{(\lambda)},y^{(\lambda)},g^{(\lambda)})$ is known, it is a matter of routine to generate arbitrary number of new solutions using Proposition \ref{prop1}.

\begin{remark} \label{rem1}\rm
According to \cite{Dav}, we have the following results of integration of \eqref{omegaalpha2}, 
\begin{itemize}
\item for $l/k>1/k$ (case A)
$$\numbered\label{sol1}
\omega_0 = 2\arccos [\sn (\frac{-\alpha p}{\sqrt{k}} ; \frac1p)], 
$$
\item for $|l/k|<1/k$  (case B)
$$\numbered\label{sol2}
\omega_0 = 2\arcsin [\dn (\frac{\alpha}{\sqrt{k}} ; p)],  
$$
\item for $l = 1$ (case C)
$$\numbered\label{sol3}
\omega_0 = 4\arctan (\exp{\frac{\alpha}{\sqrt{k}}}),  
$$
\end{itemize}
where we have denoted
$$\numbered\label{p}
p = \sqrt{\frac{1 + l}{2}}.
$$
Note that case C coincides with \it one-soliton solution \rm of the sine-Gordon equation, the B\"acklund transformation of zero solution $\omega = 0$. For details, corresponding solutions of the CAE and surfaces of constant astigmatism we refer to \cite{Hl}. Hence, the case when $l = 1$ is taken out of consideration in the sequel.
\end{remark}

\begin{remark} \rm
Substituting solutions from Remark \ref{rem1} into \eqref{b} one can routinely integrate to obtain:
\begin{itemize}
\item in case A (for $l/k>1/k$)
$$\numbered\label{b2}
b = -\alpha + \frac{2k\lambda}{\sqrt{(k+\lambda^2)^2-4kp^2\lambda^2}} 
\arctanh \frac{2 k\sn^2( \frac{\alpha p}{\sqrt{k}}, \frac1p) - \lambda^2- k}
{\sqrt{(k+\lambda^2)^2-4kp^2\lambda^2}} \\
+ 
\frac{\lambda^2\sqrt{k}}{p}  \sum_{\gamma} 
\frac{\gamma^2 -p^2 }{\gamma^2(2\gamma^2 k-\lambda^2-k)} \{\Pi[\sn(\frac{\alpha p}{\sqrt{k}},\frac1p), \frac{1}{\gamma^2}, \frac1p] \\
+\frac{\gamma p}{2\sqrt{(1-\gamma^2)(p^2-\gamma^2)}} 
(\ln [
\gamma^2 p^2 (-\gamma^2+\sn^2(   \frac{\alpha p}{\sqrt{k}}, \frac1p))]  \\
- 
\ln [(p^2-2 \gamma^2+1) \sn^2(  \frac{\alpha p}{\sqrt{k}}, \frac1p)
-2 p^2+\gamma^2+\gamma^2 p^2\\
-2  p \sqrt{(1-\gamma^2)(p^2-\gamma^2)} \cn(   \frac{\alpha p}{\sqrt{k}}, \frac1p) \dn(   \frac{\alpha p}{\sqrt{k}}, \frac1p) 
])
\} 
,
$$
where $\gamma$ are roots of $ k\gamma^4-(k+\lambda^2)\gamma^2+\lambda^2 p^2$,  and
\item in case B (for $|l/k|<1/k$)
$$\numbered\label{b3}
b = -\alpha - 
\frac{2k\lambda}{\sqrt{4kp^2\lambda^2-(k+\lambda^2)^2}}
\arctan \frac{2kp^2\sn^2(\frac{\alpha}{\sqrt{k}}, p) -\lambda^2 -k}{\sqrt{4kp^2\lambda^2-(k+\lambda^2)^2}} \\
-\sqrt{k}\lambda^2\sum_{\gamma} 
\frac{1-\gamma^2}{\gamma^2(2\gamma^2kp^2-\lambda^2-k)}
\{\Pi[\sn(\frac{\alpha}{\sqrt{k}}, p), \frac{1}{\gamma^2}, p] \\
+\frac{ \gamma p}{ 2\sqrt{(1-\gamma^2)(p^2-\gamma^2)}}
(
\ln[\gamma^2 p^2 (\sn^2(\frac{\alpha}{\sqrt{k}}, \frac{1}{p})-\gamma^2)]  \\
- 
\ln[(p^2-2 \gamma^2+1) 
\sn^2(\frac{\alpha}{\sqrt{k}}, \frac{1}{p})-2 p^2+\gamma^2+\gamma^2 p^2 \\
-2  p  \sqrt{(1-\gamma^2)(p^2-\gamma^2)} 
\cn(\frac{\alpha}{\sqrt{k}}, \frac{1}{p}) 
\dn(\frac{\alpha}{\sqrt{k}}, \frac{1}{p}) 
]
)\},
$$
where $\gamma$ are roots of $k p^2 \gamma^4 - (k+\lambda^2)\gamma^2+\lambda^2$. 
Here $\Pi$ denotes incomplete elliptic integral of the third kind.
\end{itemize}
\end{remark}

\begin{remark} \label{rem0}\rm
A closer look shows that the solution \eqref{gx} is periodic in the $y$-direction. Indeed, the shifts 
$\alpha\mapsto\alpha + 4\sqrt{k}`K(p)$ (in case A) and $\alpha\mapsto\alpha + 2\sqrt{k}`K(p)$ (in case B) leave 
$x^{(\lambda)}$ and $g^{(\lambda)}$ unchanged, while the $y^{(\lambda)}$ is translated by
$$
P_{\rm A} = \mathfrak{Re} (\frac{2 \sqrt{k} [(k + \lambda^2)^2 -4 k \lambda^2 p^2 ] `K(p)}{
(k + \lambda^2)^2 - 4 k^2\lambda^2 } 
+ \frac{8 k^{\frac32} \lambda^2 p `E[\sn(p`K(p) , \frac{1}{p}), \frac{1}{p}] }{(k + \lambda^2)^2 - 4 k^2\lambda^2 })
$$
in case A and by
$$
P_{\rm B} = \frac{2\sqrt{k} (k-\lambda^2)^2  }{(k + \lambda^2)^2 -4 k^2 \lambda^2 } `K(p)
+
\frac{8 k^{\frac32} \lambda^2 }{(k + \lambda^2)^2 -4 k^2 \lambda^2} `E(p)
$$
in case B. Here $`K$ and $`E$ denote complete elliptic integrals of the first and the second kind respectively.
Remarkably enough, the shift 
$\beta\mapsto\beta + 2\pi`i$ also leaves 
$x^{(\lambda)}$ and $g^{(\lambda)}$ unchanged and translates  $y^{(\lambda)}$ by complex period
$$
\frac{-2k\lambda\pi `i (k^2-\lambda^4)}{(\lambda^4-2 k l \lambda^2 +k^2)^{3/2}}.
$$
Finally, a detailed look at the formula \eqref{prop1eq} reveals that the $n$-th solution, arising from the periodic seed \eqref{gx}, is also periodic with period 
$
P_{\rm A} \times (\prod_{i = 1}^{n - 1} S^{[i]}_{22} )
$
in case A and 
$
P_{\rm B} \times(\prod_{i = 1}^{n - 1} S^{[i]}_{22} )
$
in case B.
\end{remark}

A graph of the solution $z(x,y) = z^{(\lambda)}(x^{(\lambda)},y^{(\lambda)})$, where $x,y$ and $z = 1/g^2$ are given by \eqref{gx} 
and are parameterized by coordinates $\xi,\eta$, can be easily plotted for both cases A and B, see Figure \ref{CAsol1}. Periodicity in $y$-direction can be clearly seen.
\begin{figure}[ht] 
\begin{center} 
\includegraphics[scale=0.36]{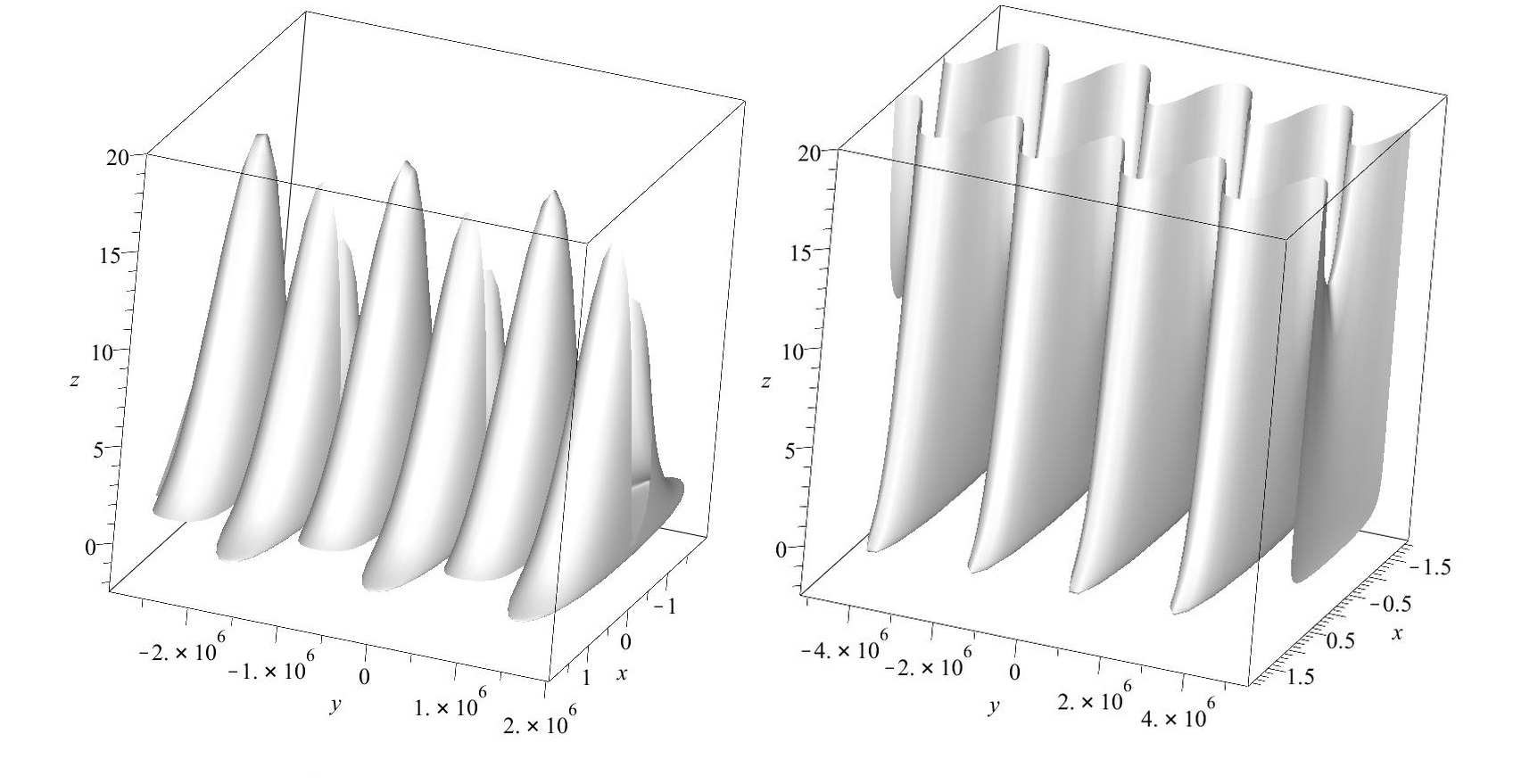}
\caption{Solutions $z^{(\lambda)}(x^{(\lambda)},y^{(\lambda)})$ of the CAE, $\lambda = 1.001$, $k = 1$, $K = 0$, $l = 3/2$ (case A, left),  $l = 1/2$ (case B, right).  } \label{CAsol1} 
\end{center}
\end{figure} 

\section{Surfaces of constant astigmatism}
Let $\mathbf r $ be a pseudospherical surface corresponding to a sine-Gordon solution $\omega(\xi,\eta)$, i.e. 
$\mathbf r (\xi,\eta)$, its unit normal $\mathbf n (\xi,\eta)$ and $\omega(\xi,\eta)$ satisfy the Gauss-Weingarten system  
$$
\mathbf r_{\xi\xi} = \omega_\xi (\cot \omega)  \mathbf r_\xi
-  \omega_\xi (\csc \omega) \mathbf r_\eta, \\
\mathbf r_{\xi\eta} = (\sin \omega) \mathbf n, \\
\mathbf r_{\eta\eta} = \omega_\eta (\cot \omega)  \mathbf r_\eta
 -  \omega_\eta (\csc \omega) \mathbf r_\xi, \\
\mathbf n_\xi = (\cot \omega) \mathbf r_\xi
 - (\csc \omega)  \mathbf r_\eta, \\
\mathbf n_\eta = (\cot\omega) \mathbf r_\eta
 - (\csc \omega) \mathbf r_\xi, 
$$
compatible by virtue of the sine-Gordon equation \eqref{SG}.

A B\"acklund transformation $\mathbf r^{(\lambda)}$ of the surface $\mathbf r$ is (see e.g. \cite{R-S})   
$$\numbered\label{BTsurf}
\mathbf r^{(\lambda)} = \mathbf r
 + \frac{2\lambda\csc\omega}{1 + \lambda^2}[\sin(\frac{\omega - \omega^{(\lambda)}}{2}) {\mathbf  r_{\xi}}
 + \sin(\frac{\omega + \omega^{(\lambda)}}{2}) {\mathbf r_{\eta}}],
$$
where $\omega^{(\lambda)}$ satisfies \eqref{BT}.
Substituting $\lambda = 1$ into \eqref{BTsurf} one obtains  a \it complementary \rm pseudospherical surface
$$\numbered\label{BTcomp}
\mathbf r^{(1)} = \mathbf r
 + [\sin(\frac{\omega - \omega^{(1)}}{2}) {\mathbf  r_{\xi}}
 + \sin(\frac{\omega + \omega^{(1)}}{2}) {\mathbf r_{\eta}}]\csc\omega.
$$
According to \cite{H-M I}, the unit normal, $\tilde{\mathbf n}^{(1)}$, of the corresponding constant astigmatism surface is 
$$
\tilde{\mathbf n}^{(1)} = \mathbf r^{(1)} - \mathbf r_0
$$
and a family of constant astigmatism surfaces $\tilde{\mathbf r}^{(1)}$, having evolutes $\mathbf r^{(1)}$ and $\mathbf r_0$, is given by 
$$\numbered\label{CAsurf}
\tilde{\mathbf{r}}^{(1)} = \mathbf{r}_0 -  \tilde{\mathbf{n}}^{(1)} \ln g^{(1)} + s\tilde{\mathbf{n}}^{(1)},
$$
where $g^{(1)} = g^{(\lambda)}|_{\lambda = 1} $ and $s$ is a real offsetting parameter, change of which moves every point along the normal, i.e. it takes the surface to a parallel one. 

Pseudospherical surfaces corresponding to the solutions satisfying \eqref{LipCond} were constructed by 
Zadadaev \cite{Zad} and, according to \cite{Ov}, they coincide with surfaces studied in the 19th century by Minding \cite{Mi}, see also \cite{Bia G, No}. In Zadadayev's parameterization by asymptotic coordinates the surfaces are given by formula
$$\numbered\label{PSZad}
\mathbf{r}_0 = 
\frac{\sqrt{k}}{p(k+1) }
(
\begin{array}{c}
2\sin\frac{\omega_0}{2} \sin[p (\xi - \eta)] \\
2\sin\frac{\omega_0}{2} \cos[p (\xi - \eta)] \\
\xi + \eta +\frac{1}{\sqrt{k}}\int\cos \omega_0\dif \alpha
\end{array}
),
$$
where $\omega_0 = \omega_0(\alpha) = \omega_0(k\xi + \eta)$ is one of the solutions \eqref{sol1}--\eqref{sol3} and the constants $k,p$ have the same meaning as in the previous section, see \eqref{LipCond} and \eqref{p}. For pictures corresponding to all three cases, A, B and C, see Figure \ref{PSZadfig}, cf. \cite{Mi, Zad}, \cite[p.~192--193]{Bia G} or \cite[p.~228]{No}.

\begin{figure}[ht] 
\begin{center} 
\includegraphics[scale=0.17]{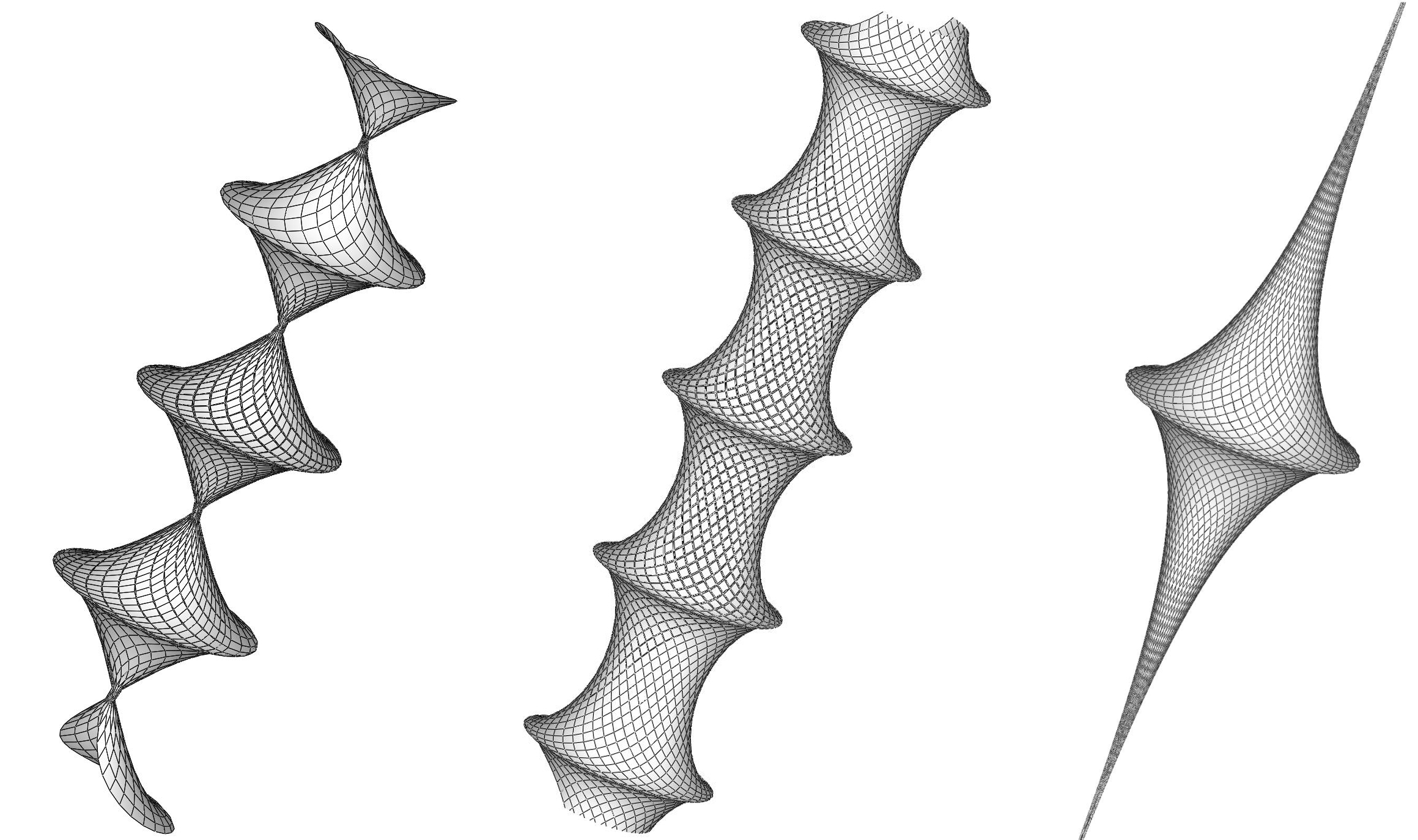}
\caption{From the left: Minding's pseudospherical surfaces $\mathbf{r}_0$ (parameterized by $\xi,\eta$) corresponding to solutions 
\eqref{sol1}, \eqref{sol2} and \eqref{sol3} respectively, $k = 1$, $l = 3/2$ (left), $l = 1/2$ (middle),  $l = 1$ (right). 
The rightmost surface is the pseudosphere.} \label{PSZadfig} 
\end{center}
\end{figure} 

Let us compute a B\"acklund transformation of $\mathbf r_0$.
Firstly, applying trigonometric identities, one obtains
$$
\sin(\frac{\omega_0 - \omega_0^{(\lambda)}}{2}) = \frac{-2\delta^{(\lambda)}}{1+{\delta^{(\lambda)}}^2} \cos\omega_0 
+ \frac{1 - {\delta^{(\lambda)}}^2}{1 + {\delta^{(\lambda)}}^2} \sin\omega_0, \\
\sin(\frac{\omega_0 + \omega_0^{(\lambda)}}{2}) = \frac{2\delta^{(\lambda)}}{1+{\delta^{(\lambda)}}^2}.
$$
Substituting into \eqref{BTsurf} yields
$$ \numbered\label{BTPSZad}
\mathbf r^{(\lambda)} = \mathbf r_0
 + \frac{2\lambda}{1 + \lambda^2} [ (\frac{1 - {\delta^{(\lambda)}}^2}{1 + {\delta^{(\lambda)}}^2} 
- 
 \frac{2\delta^{(\lambda)}\cot\omega_0}{1+{\delta^{(\lambda)}}^2} )  {\mathbf  r_{0,\xi} }
+
\frac{2\delta^{(\lambda)}\csc\omega_0}{1+{\delta^{(\lambda)}}^2}  {\mathbf  r_{0,\eta}}
],
$$
where
$$
\mathbf r_{0,\xi} = 
\frac{\sqrt{k}}{p(k+1) }
(
\begin{array}{c}
k \omega_{0,\alpha} \cos\frac{\omega_0}{2}  \sin[p (\xi-\eta)] + 2p \sin\frac{\omega_0}{2} \cos[p (\xi-\eta)]  \\
k \omega_{0,\alpha} \cos\frac{\omega_0}{2}  \cos[p (\xi-\eta)] - 2p \sin\frac{\omega_0}{2} \sin[p (\xi-\eta)] \\
1 + \sqrt{k} \cos \omega_0
\end{array}
),
$$
$$
\mathbf r_{0,\eta} = 
\frac{\sqrt{k}}{p(k+1) }
(
\begin{array}{c}
\omega_{0,\alpha} \cos\frac{\omega_0}{2}  \sin[p (\xi-\eta)]-2p \sin\frac{\omega_0}{2} \cos[p (\xi-\eta)]  \\
\omega_{0,\alpha} \cos\frac{\omega_0}{2}  \cos[p (\xi-\eta)]+2p \sin\frac{\omega_0}{2} \sin[p (\xi-\eta)]  \\
1 + \frac{\cos \omega_0}{\sqrt{k}}
\end{array}
)
$$
and $\omega_{0,\alpha}$ is given by \eqref{omegaalpha2}, i.e.
$$
\omega_{0,\alpha} = \pm\sqrt{\frac{2l - 2\cos\omega_0}{k}}.
$$

The transformed Minding's surfaces, $\mathbf r^{(\lambda)}$, 
corresponding to  cases A, B and C can be seen in Figure \ref{PStr}.  

\begin{figure}[ht] 
\begin{center} 
\includegraphics[scale=0.17]{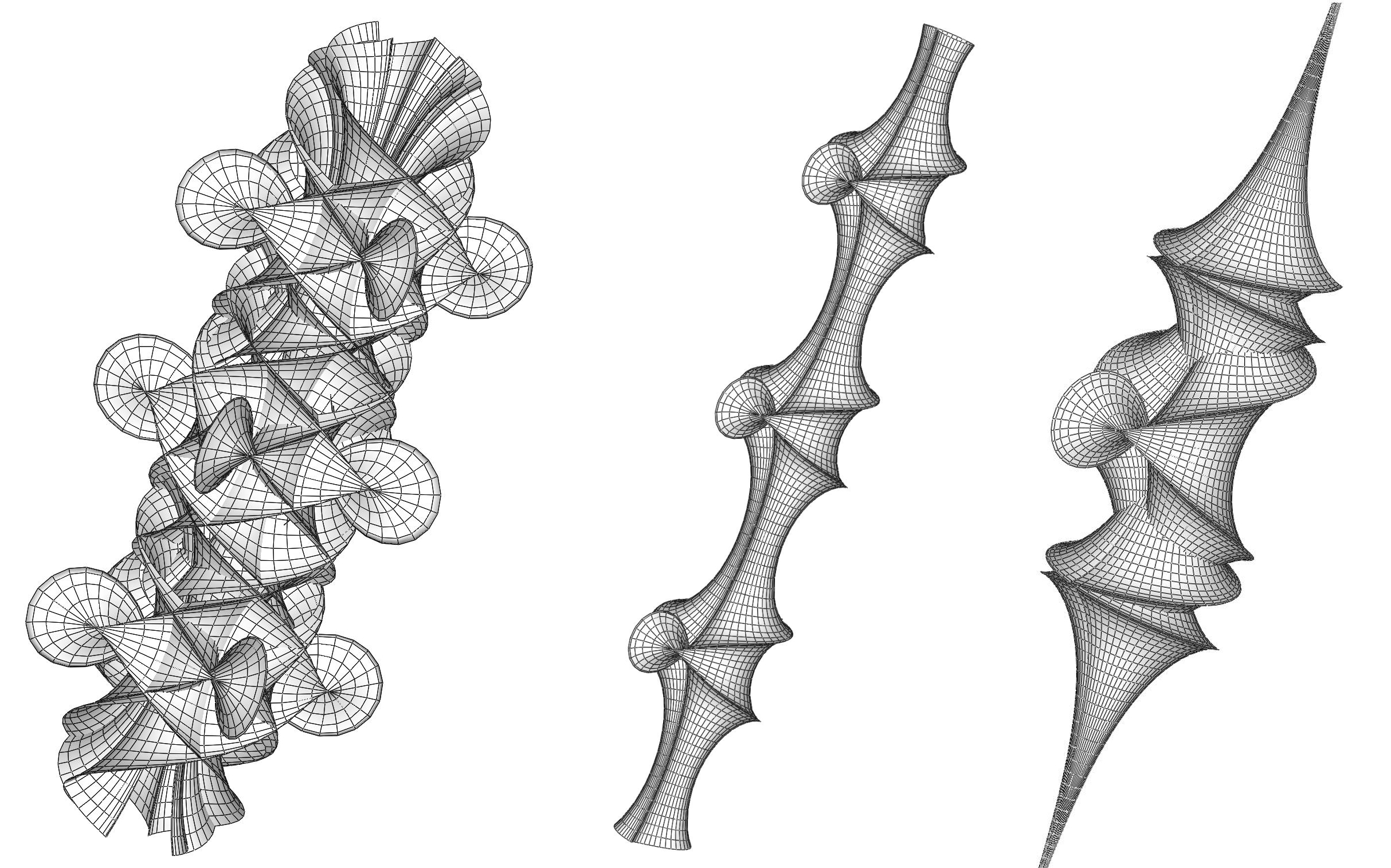}
\caption{Transformed Minding's pseudospherical surfaces (parameterized by $\alpha,\beta$) 
corresponding to cases A, B and C respectively, 
$k = 1, \lambda = 1, K = 0$, $l = 3/2$ (left), $l = 1/2$ (middle),  $l = 1$ (right). 
The righmost is the two-soliton Kuen's surface \cite{Ku}, see also \cite[p.~470]{Bia G}.} \label{PStr} 
\end{center}
\end{figure} 

Now we are at the point of constructing corresponding family of constant astigmatism surfaces, the common involutes of complementary pseudospherical surfaces $\mathbf r_0$ and $\mathbf{r}^{(1)}$.
The unit normal, $\tilde{\mathbf n}^{(1)}$, of the constant astigmatism surface is 
$$
\tilde{\mathbf n}^{(1)} = \mathbf r^{(1)} - \mathbf r_0 = 
 (\frac{1 - {\delta^{(1)}}^2}{1 + {\delta^{(1)}}^2} - \frac{2\delta^{(1)}\cot\omega_0}{1+{\delta^{(1)}}^2} )  {\mathbf  r_{0,\xi} }
+
\frac{2\delta^{(1)}\csc\omega_0}{1+{\delta^{(1)}}^2}  {\mathbf  r_{0,\eta}},
$$ 
where ${\delta^{(1)}}$ arises from \eqref{deltasol} by substituting $\lambda = 1$. 

Note that $\tilde{\mathbf n}^{(1)}(\xi,\eta)$ parameterizes the unit sphere by \it slip lines, \rm see Figure \ref{n1}. 
A~physical interpretation, see \cite{H-M I}, says that the
principal stress lines of a sphere deformed in tangential directions under the Tresca yield condition (e.g., a metallic sheet
obtaining a spherical shape) constitute so called \it orthogonal equiareal pattern \rm 
and the slip-lines are curves forming an angle of $\pi/4$ with principal stress lines. 

The  orthogonal equiareal pattern associated to $\tilde{\mathbf n}^{(1)}(\xi,\eta)$ is then  
$\tilde{\mathbf n}^{(1)}(x,y)$, where $x = x^{(1)}$ and $y = y^{(1)}$ are given by first two equations in \eqref{gx}. To perform the reparameterization, one has to express $\xi, \eta$ in terms of $x, y$. This is left as an open challenge. 

\begin{figure}[ht] 
\begin{center} 
\includegraphics[scale=0.15]{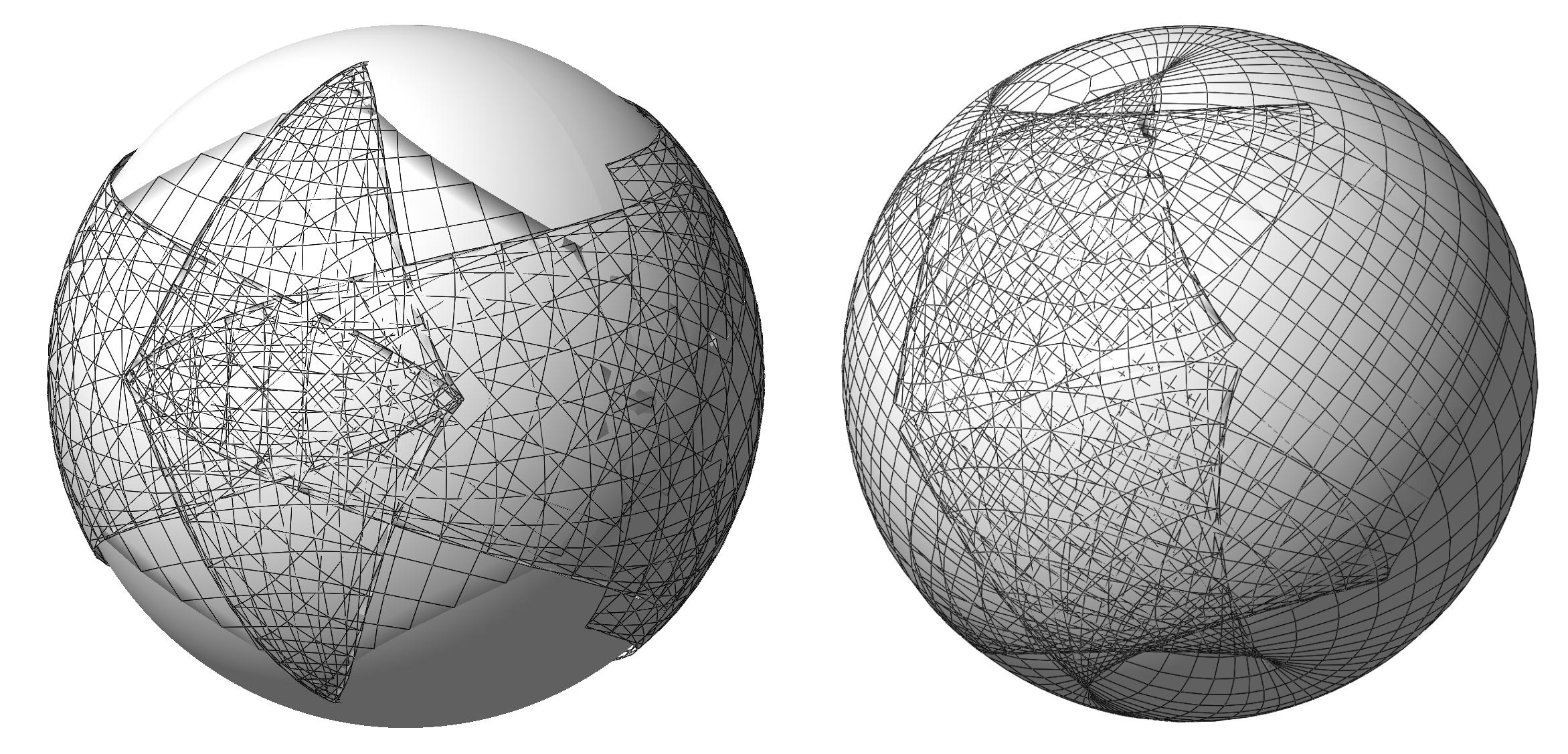}
\caption{Slip line fields $\tilde{\mathbf n}^{(1)}(\xi,\eta)$ corresponding to case A (left) and case B (right).} \label{n1} 
\end{center}
\end{figure} 

A family of constant astigmatism surfaces $\tilde{\mathbf r}^{(1)}$ is then given by \eqref{CAsurf}, see Figures \ref{CA1b} and \ref{CA1a}. 

\begin{figure}[ht] 
\begin{center} 
\includegraphics[scale=0.20]{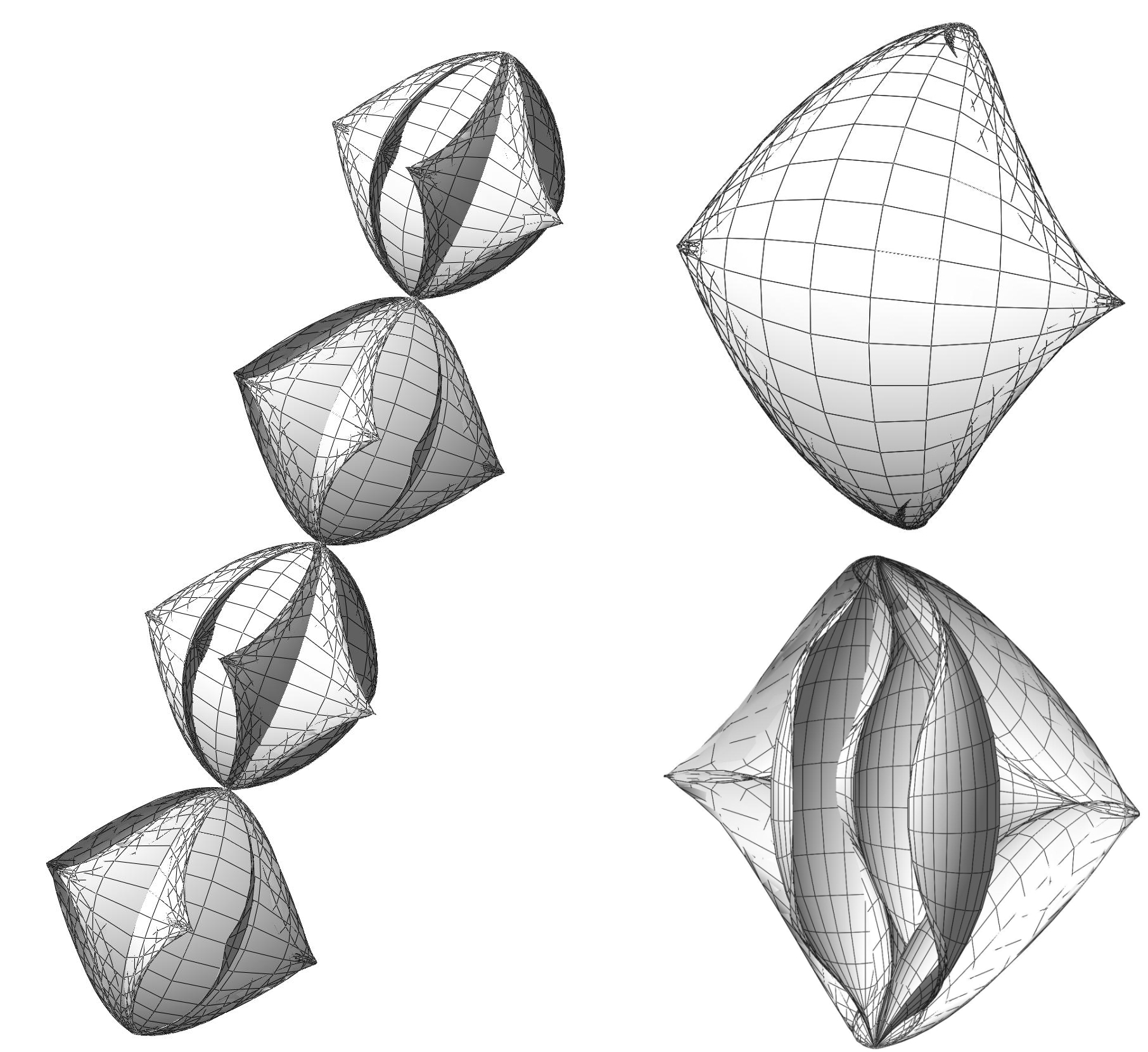}
\caption{Constant astigmatism surface $\tilde{\mathbf r}^{(1)}$, parameterized by $\alpha,\beta$,  corresponding to case~A. 
One of the pieces the surface is formed of is zoomed in the right. The piece is displayed from two mutually opposite directions.} \label{CA1b} 
\end{center}
\end{figure}

\begin{figure}[ht] 
\begin{center} 
\includegraphics[scale=0.13]{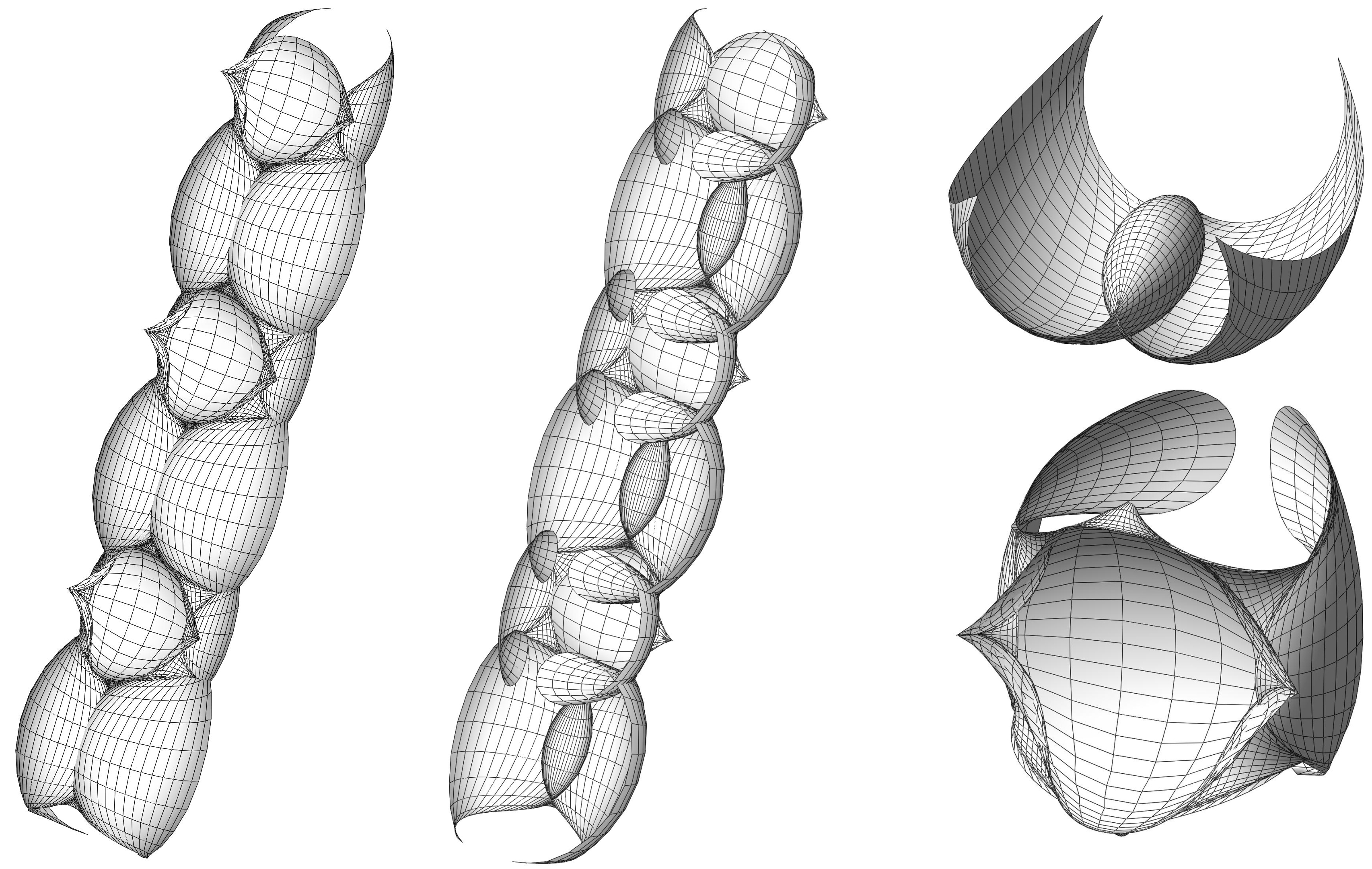}
\caption{Constant astigmatism surface $\tilde{\mathbf r}^{(1)}$, parameterized by $\alpha,\beta$, corresponding to case~B. 
The first two pictures from the left show two views of the same surface from mutually opposite directions. 
The repeated pieces are zoomed in the right.} \label{CA1a} 
\end{center}
\end{figure}

Let us proceed to one more example. Using Proposition \ref{prop1} we routinely construct solution $z^{(\lambda_1\lambda_2)}(x^{(\lambda_1\lambda_2)}, y^{(\lambda_1\lambda_2)})$ (see Figure \ref{CAsol2}) from known solution $z^{(\lambda_1)}(x^{(\lambda_1)}, y^{(\lambda_1)})$. Twice transformed Minding's pseudospherical surfaces $\mathbf{r}^{(\lambda_1,\lambda_2)}$ can be also routinely computed, see Figure \ref{PS12}. Corresponding 
constant astigmatism surfaces are plotted in Figure 
\ref{CA12}. 

\begin{figure}[ht] 
\begin{center} 
\includegraphics[scale=0.3]{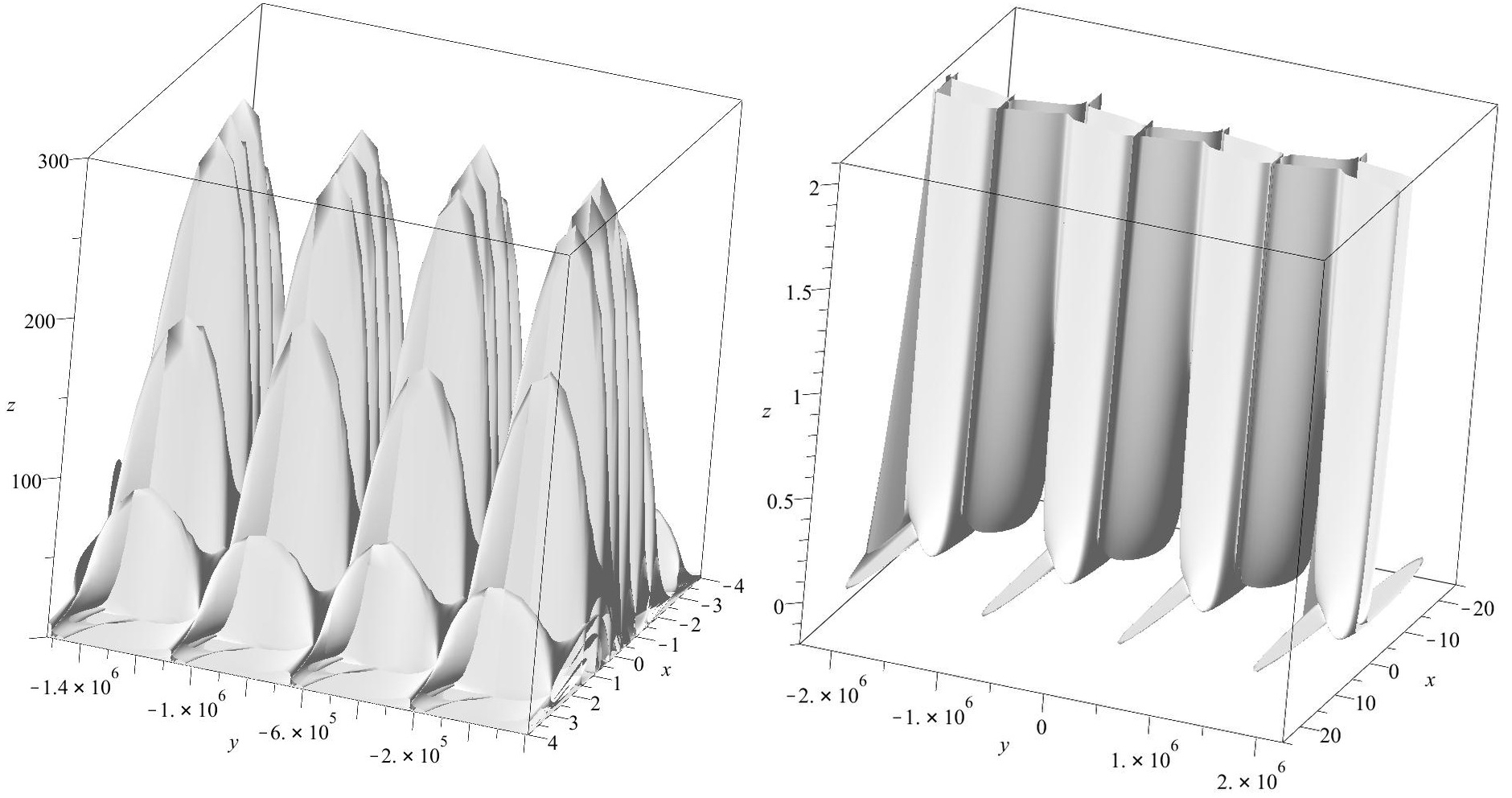}
\caption{A rather complicated multivalued solutions $z^{(\lambda_1\lambda_2)}(x^{(\lambda_1\lambda_2)}, y^{(\lambda_1\lambda_2)})$ of the CAE that are periodic in the $y$-direction, $\lambda_1 = 1.001$, $\lambda_2 = 1.3$, $k = 1$, $K = 0$, $l = 3/2$ (case A, left),  $l = 1/2$ (case B, right).} \label{CAsol2} 
\end{center}
\end{figure}

\begin{figure}[ht] 
\begin{center} 
\includegraphics[scale=0.24]{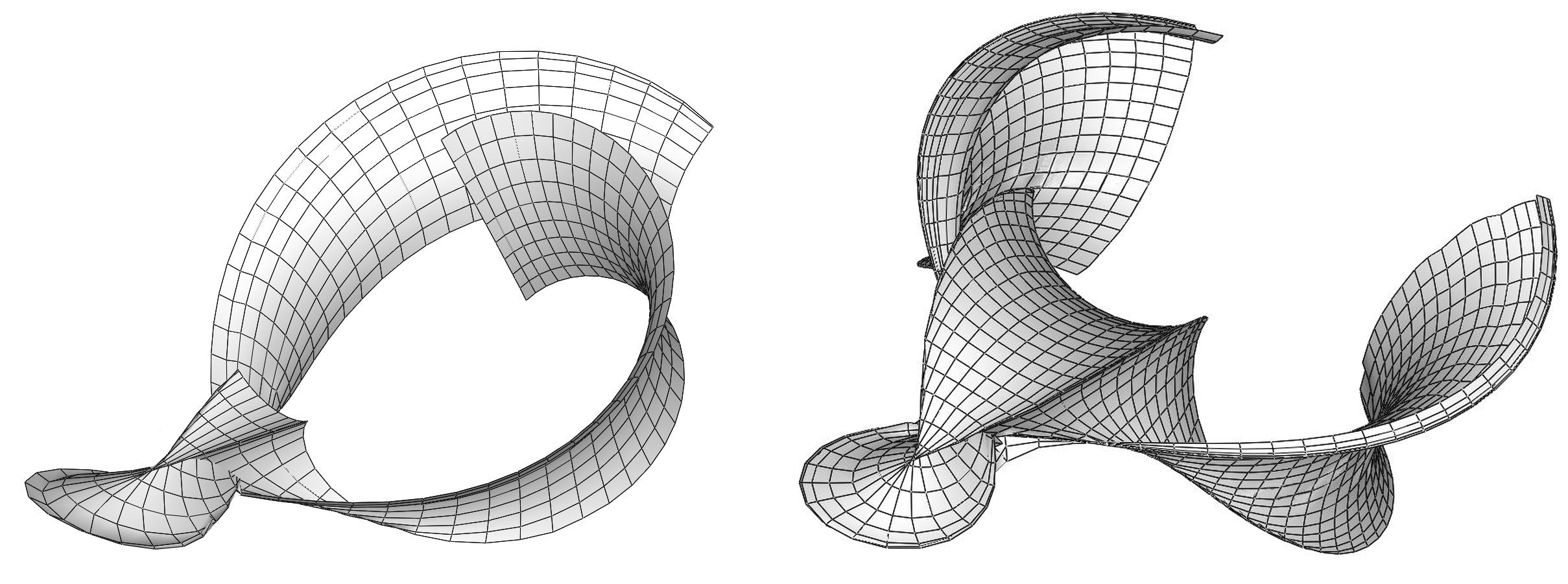}
\caption{Pieces of pseudospherical surfaces ${\mathbf r}^{(\lambda_1\lambda_2)}$, parameterized by 
$\alpha,\beta$, with $\lambda_1 = 1$, $\lambda_2 = 1.3$. Case A (left), case B (right).} \label{PS12} 
\end{center}
\end{figure}

\begin{figure}[ht] 
\begin{center} 
\includegraphics[scale=0.25]{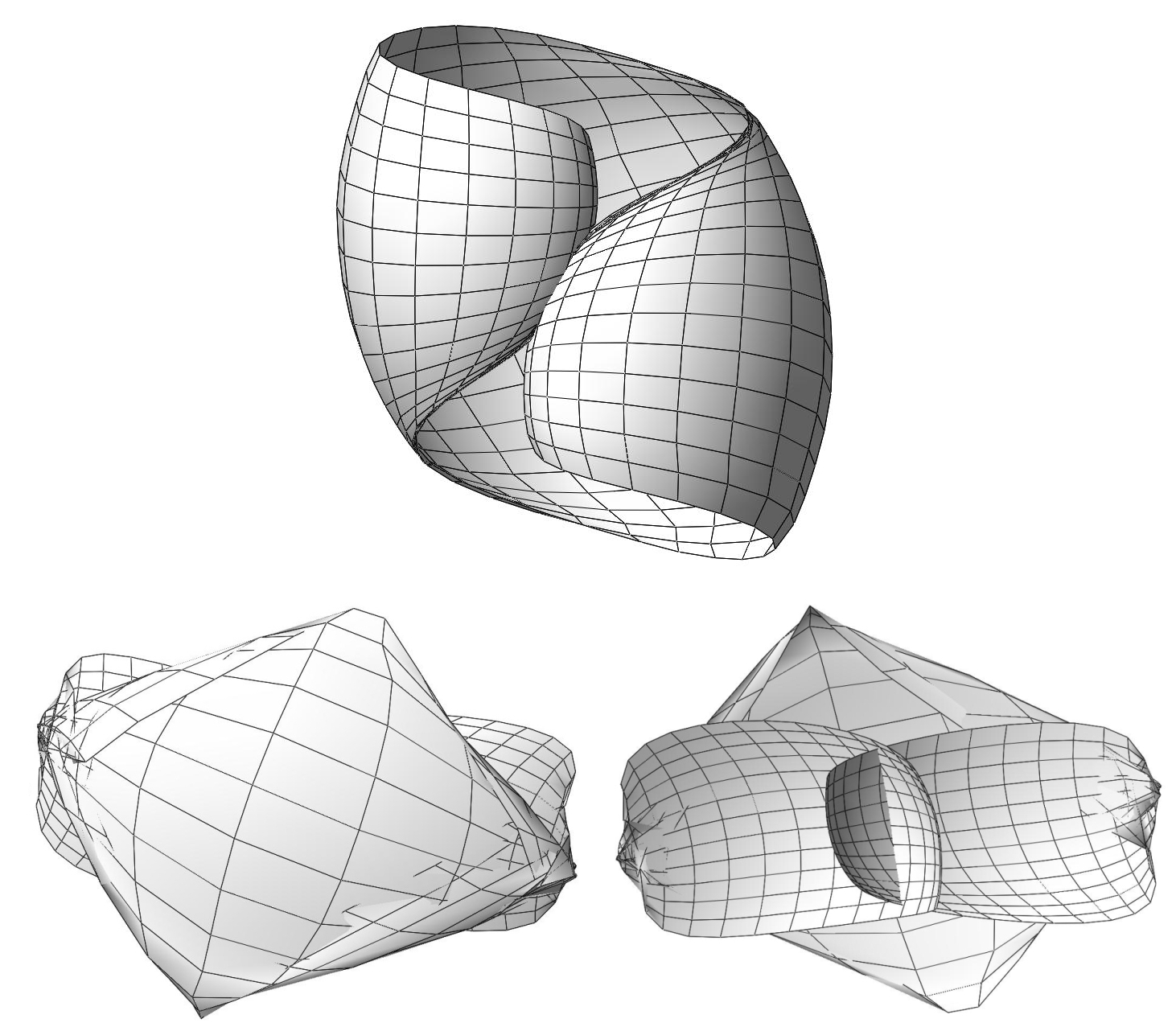}
\caption{Two pieces of constant astigmatism surface, the common involute of ${\mathbf r}^{(\lambda_2)}$ and ${\mathbf r}^{(1\lambda_2)}$ with $\lambda_2 = 1.3$, case B. The second piece is displayed from two views. Surface is parameterized by $\alpha,\beta$. }
\label{CA12}  
\end{center}
\end{figure}

\ack

The author was supported by Specific Research grant SGS/17/2016 of the Silesian University in Opava and
wishes to extend his gratitude to Michal Marvan and Petr Blaschke for valuable discussions.

\section {References}

\end{document}